\numberwithin{equation}{section} 
\numberwithin{figure}{section} 
  \theoremstyle{plain}
  \newtheorem{thm}{Theorem}[section]
  \theoremstyle{plain}
  \newtheorem{cor}[thm]{Corollary}
  \theoremstyle{plain}
  \newtheorem{prop}[thm]{Proposition}
  \theoremstyle{remark}
  \newtheorem{rem}[thm]{Remark}
  \theoremstyle{remark}
  \theoremstyle{plain}
\def\bfR#1{{\bf R}^#1}
\def\com#1{ \hbox{#1}}
\def\<{{\langle }}
\def\>{{\rangle }}
\def\bfR#1{{\bf R}^#1}
\def\com#1{ \quad\hbox{#1}\quad}
\def\<{{\langle }}
\def\>{{\rangle }}
\def\k{\rm {\bf Kg}}
\def\m{\rm {\bf m}}
\def\s{\rm {\bf s}}
\def\um{\rm {\bf um}}
\def\ud{\rm {\bf ud}}
\def\ut{\rm {\bf ut}}
\begin{document}

\title[Relativistic Lagrange Points]{Existence and Stability the Lagrangian point $L_4$ for the Earth-Sun system under a relativistic framework}
\author{Oscar Perdomo}
\date{\today}

\curraddr{Department of Mathematics\\
Central Connecticut State University\\
New Britain, CT 06050\\}
\email{ perdomoosm@ccsu.edu}


\begin{abstract}

It is well known that, from the Newtonian point of view, the Lagrangian point $L_4$ in the circular restricted three body is stable if $\mu< \frac{1}{18}(9-\sqrt{19})\approx 0.03852$. In this paper we will provide a formula that allows us to compute the eigenvalues of the matrix that determines the stability of the equilibrium points of a family of ordinary differential equations. As an application we will show that, under the relativistic framework, the Lagrangian point $L_4$ is also stable for the Sun-Earth system.  Similar arguments show the stability for $L_4$ not only for the Sun-Earth system but  for systems coming from a range of values for $\mu$ similar to those in the Newtonian restricted three body problem.

\end{abstract}

\maketitle

\section{Introduction}

Let us consider a circular solution of the two body problem. Let us assume that these two bodies have masses $m_1$ and $m_2$ and let us call them primaries bodies. The study of the motion of a third body with mass $m_3$ that moves under the effect of the gravitational force produced by the primary bodies, under the condition that the mass $m_3$ is so small that it does not  affect the motion of the primaries, is known as the circular restricted three body problem (Circular RTBP). We assume that the $m_2\le m_1$ and we define $\mu=\frac{m_2}{m_1+m_2}\le 1/2$. Lagrangian points can be viewed as possible motions with the property that the distance of the body with mass $m_3$ to the primaries is constant. There are 5 possible  motions and they are denoted by $L_1$, $L_2$, $L_3$, $L_4$ and $L_5$. For the motions $L_1$, $L_2$, $L_3$ the three bodies stay in the same line. For the motions $L_4$ and $L_5$, the three bodies alway lie on the vertices of an equilateral triangle.

After doing a change of coordinates by considering a rotating frame (see section \ref{units}), we can see that the Lagrangian points are the equilibrium points of an ordinary differential equation and they are found by solving a system of two equations with two variables. Even though this system is not trivial, we can find formulas for the solution, this is, we have exact expressions for the solutions. Having this exact formula for the Lagrangian points allows to show that the points $L_1$, $L_2$, $L_3$ are always unstable and $L_4$ and $L_5$ are stable for values of $\mu<\frac{1}{18}(9-\sqrt{19})$.

When we consider the problem under the relativistic point of view the situation is very similar in nature, in the sense that the Lagrangian points are the equilibrium points of an ordinary differential equation and they are found by solving a system of two equations with two variables. The difference is that the solution of the system may not have an exact analytical expression, even the existence of the solution can be questioned mathematically. So far the Lagrange points under the relativistic point of view have been studied by first approximating the solution under the assumption that this solution exists, and the stability of this equilibrium points have been studied by approximating the characteristic polynomial that produces the eigenvalues that determines the  stability of the equilibrium point. For this reason it is not surprising that some papers states that the relativistic equilibrium point $L_4$ is unstable for all values of $\mu$ (\cite{BH}) and others (\cite{A} and \cite{D}) states that this point is stable for values of $\mu$ close to those obtained in the Newtonian case.

For the reasons explained above we have that in order to mathematically establish the stability of the Lagrangian points under relativistic framework we need to solve the following questions/difficulties:
\begin{enumerate}
\item
Can we show that the system of two equations with two variables that defines the Lagrangian points has solutions? These equations are displayed on page 273 of the paper \cite{BH}. And, if there are solutions, can we find a closed expression for them?
\item
The stability of $L_4$ in the Newtonian case is proven by showing that the eigenvalues of a 4 by 4 matrix are complex numbers with zero real part. In order to prove the stability of $L_4$ in the relativistic case we need to prove that these eigenvalues also have real part equal to zero. The set of complex numbers with zero real part (a line) is a set with measure zero and therefore no rounding can be done to show that a complex number has zero real part.  Here is the second question: how to prove, without having an exact expression for the equilibrium points, that these eigenvalues in the relativistic case have zero real part?

\end{enumerate}

The main contribution of this paper is related with the second numeral presented above. We provide  an exact analytical expression for the characteristic polynomial of the matrix that determines the stability of $L_4$, see Theorem \ref{cp}. This analytic expression is obtained by assuming that the matrix is evaluated at the exact solution of the system of equations that determines $L_4$.

The second  result in our paper has to do with the numeral one  presented above. In section \ref{ex} we will prove the existence of the point $L_4$ for the Earth-Sun system using the Poincare-Miranda Theorem. Regarding this section the author wants to apologize for using quotient of integers instead of decimals. The reason for doing so is that software like Wolfram Mathematica can work with infinity precision (no roundoff error) when rational numbers are used instead of decimals. A decimal approximation will be placed in front of every quotient of two integers with the intension of giving an idea of what each number is located in the real line.

The intermediate value theorem states that a continuous function $f(x)$ defined on a closed interval $[a,b]$ must have a zero under the assumption that $f(a)<0$ and $f(b)>0$. The Poincare-Miranda Theorem is a generalization of the intermediate value theorem. We will be using the two dimensional version of the Poincare-Miranda theorem that states that if two continuous functions on two variables $f(x,y)$ and $g(x,y)$ are defined on the rectangle $[a,b] \times [c,d]$ and  $f(x,c)>0$ and $f(x,d)<0$ for all $x\in[a,b]$ and $g(a,y)>0$ and $g(b,y)<0$ for all $y\in[c,d]$ then, there is point a $(x_0,y_0)$ inside the rectangle $[a,b]\times[c,d]$ that satisfies both equations: $f(x_0,y_0)=0$ and $g(x_0,y_0)=0$. In other words, the theorem states that if the function $f$ is negative in the lower side of the rectangle and positive in the upper side of the rectangle, and moreover, the function $g$ is negative in the left side of the rectangle and positive in the right side of the rectangle, then, there must be a point inside the rectangle that solves the system $f=0$ and $g=0$.

With the intension to set up notation for the change of units and to get familiar with the terminology, section 2 explains the Newtonian case. Section 3 displays the ODE for the relativistic case. Section 4 proves mathematically the existence of $L_4$ for the relativistic RTBP  for the Sun-Earth system. Section 5 shows the Theorem that allows us to compute without rounding the characteristic polynomial that provides the stability of $L_4$. An important aspect of this formula is that it mathematically shows that this polynomial has the form $\lambda^4+a_1 \lambda^2+a_2$. This is not obvious. For example, the reason the paper \cite{BH}  shows the instability of the point $L_4$ for all the range of $\mu$ is because the method used to obtain the characteristic  polynomial leads the authors to an expression of the form $b_1(c,\mu) \lambda^4+
b_2(c,\mu) \lambda^3+b_3(c,\mu) \lambda^2+b_4(c,\mu) \lambda+b_5(c,\mu)$ with $b_2$ and $b_4$ different form zero. See \cite{BH} page 277.
 
The author would like to thank Charles Simo, David R. Skillman and Andr\'es Mauricio Rivera for their valuable comments.

\section{Circular solutions of the two body problem and changing units} \label{units} Let us consider two bodies, (the primaries), with masses $m_1\,  \k$ and $m_2\,  \k$ moving in the space with positions $x$ and $y$. Let us take the gravitational constant to be equal to $G=6.67384*10^{-11}\, \m^3\, \k^{-1}\, \s^{-2}$. It is easy to check that for a given positive number $a$, the functions

$$
x(t)=\frac{-m_2\, a}{m_1+m_2} \, (\cos(\omega\,  t),\sin(\omega \, t)), \quad y(t)=\frac{m_1\, a}{m_1+m_2} \, (\cos(\omega\,  t),\sin(\omega \, t))
$$

with $\omega=\sqrt{\frac{G(m_1+m_2)}{a^3}}\, {\bf s}^{-1}$ satisfy the two body problem ODE 

$$\ddot{x}=\frac{m_2 G}{|x-y|^3} \, (y-x)\quad \ddot{y}=\frac{m_1 G}{|y-x|^3} \, (x-y)$$

This solution satisfies that the distance between the two bodies is always $a$ meters and moreover, both motions are periodic since they complete a revolution after $T=\frac{2 \pi}{\omega}=2 \pi\, \sqrt{\frac{a^3}{G(m_1+m_2)}}\, {\bf s} $. Let us change the units of mass, distance and time in the following way: Let us denote by $\um$ the unit of mass such that $1 {\um} = (m_1+m_2)\,  \k$, let us denote by $\ud$ the unit of distance such that $1\,  {\ud}=a\,  \m$ and  finally let us denote by $\ut$,  the unit of time $\ut$ such that 
$1\, {\ut} = \sqrt{\frac{a^3}{G (m_1+m_2)}}\quad \s$. Notice that using the units $\ut$ and $\ud$ we have that the distance between the two bodies is $1 \ud$ and the period of the motion is $2 \pi \, \ut$. We also have that the gravitation constant is $1\, 
\ud^3\, \um^{-1}\, \ut^{-2}$. We point out that the speed of light is

\begin{eqnarray}\label{sl}
 c=299792458 * \sqrt{\frac{a}{G (m_1+m_2)}}\, \frac{\ud}{\ut}
 \end{eqnarray}

If we denote by  $\mu=\frac{m_2}{m_1+m_2}$ and we work in the new units $\ut$, $\um$, $\ud$, then the mass of the first and  second body are $1-\mu$ and $\mu$ and  the motion of the primaries are given by 

$$ x(t)=-\mu \left(\cos (t),\sin (t)\right) \quad  y(t)=(1-\mu ) \, \left(\cos (t),\sin (t)\right) $$

Moreover, if a third body with position $z(t)$ and neglecting mass compared with $m_1$ and $m_2$ moves under the influence of the gravitational force of the primaries, then $z$ satisfies

\begin{eqnarray}\label{eq1}
\ddot{z}=\frac{(1-\mu)}{|x-z|^3}\, (x-z)+\frac{\mu}{|y-z|^3}\, (y-z)
\end{eqnarray}

A direct computation shows that if we take 

\begin{eqnarray}\label{eq2}
z = \left(\xi (t) \cos (t)-\eta (t) \sin (t),\eta (t) \cos (t)+\xi (t) \sin (t)\right)\, ,
\end{eqnarray}

then, (\ref{eq1}) reduces to 
\begin{eqnarray}\label{eq3}
\ddot{\xi}-2 \dot{\eta}=\frac{\partial w_0}{\partial \xi}
\com{and} 
\ddot{\eta}+2 \dot{\xi}=\frac{\partial w_0}{\partial \eta}
\end{eqnarray}

where,

\begin{eqnarray}\label{eq4}
w_0 = \frac{1}{2} (\xi^2+\eta^2)+\frac{1-\mu}{\sqrt{(\xi+\mu)^2+\eta^2}}+\frac{\mu}{\sqrt{(\xi+\mu-1)^2+\eta^2}}
\end{eqnarray} 

A direct verification shows that $\xi(t)=\frac{1-2 \mu}{2}$ and $\eta(t)=\frac{\sqrt{3}}{2}$ is a solution of the system of equations (\ref{eq3}). This equilibrium point $(\frac{1-2 \mu}{2},\frac{\sqrt{3}}{2})$ is known as the Lagrangian point $L_4$. In order to analyze the stability of $L_4$ we consider the function 

$$F_0=(\dot{\xi} ,2 \dot{\eta}+\frac{\partial w_0}{\partial \xi },\dot{\eta} ,\frac{\partial w_0}{\partial \eta }-2 \dot{\xi} ) $$

as a function of the variables $\phi=(\xi,\dot{\xi},\eta,\dot{\eta})$. It is easy to check that the ODE (\ref{eq3}) is equivalent to the ODE
$\dot{\phi}=F_0(\phi)$. In order to analyze the stability of the equilibrium solution $\phi_0=(\frac{1-2 \mu}{2},0,\frac{\sqrt{3}}{2},0)$, we compute the 4 by 4 matrix $A_0=DF_0$ evaluated at $  \xi=\frac{1-2 \mu}{2},\, \eta=\frac{\sqrt{3}}{2},\, \dot{\xi}=0, \, \dot{\eta}=0$. Since we can check that the characteristic polynomial of the matrix $A_0$ is equal to

$$\lambda ^4+\lambda ^2-\frac{27}{4} (\mu -1) \mu$$

Then, we conclude that, when either $0<\mu<\frac{1}{18} \left(9-\sqrt{69}\right)$ or $\frac{1}{18} \left(9+\sqrt{69}\right)<\mu<1$, then the real part of all the eigenvalues of $A_0$ is zero and therefore $L_4$ is linearly stable. It is known that there are 5 equilibrium solutions for the ODE (\ref{eq3}); we have $L_4$, given above, $L_5=(\frac{1-2 \mu}{2},-\frac{\sqrt{3}}{2})$ and three more of the form $(\xi_1,0)$, $(\xi_2,0)$ and $(\xi_3,0)$ usually labeled as the Lagrangian points $L_1$, $L_2$ and $L_3$. A similar analysis to the one that we just did for $L_4$ can be done for the other Lagrange points to conclude that $L_5$ is also linear stable for the same range of the parameter $\mu$ and, $L_1$, $L_2$ and $L_3$ are linearly unstable.

\section{The ODE in the relativistic case:} \label{EQ} In the relativistic case, the equation of the motion for the restricted three body problem are very similar to the one given by Equation (\ref{eq3}). It takes the form (see Brumberg, 1972, \cite{B} and Bhatnagar \cite{BH})

\begin{eqnarray}\label{eq6}
\ddot{\xi}-2 n \dot{\eta}=\frac{\partial w}{\partial \xi}-\frac{d}{dt} (\frac{\partial w}{\partial \dot{\xi}})\com{and} 
\ddot{\eta}+2 n \dot{\xi}=\frac{\partial w}{\partial \eta}-\frac{d}{dt} (\frac{\partial w}{\partial \dot{\eta}})
\end{eqnarray}
 
 where $w=w_0+\frac{1}{c^2}\, w_1$ with
 
\begin{eqnarray*}
w_1 &=&-\frac{3}{2} \left(1-\frac{1}{3} \mu  (1-\mu )\right) {\rho}^2+\frac{1}{8} \left(\dot{\eta}^2+2 (\dot{\eta}\xi -\dot{\xi}\eta )+\dot{\xi}^2+{\rho}^2\right)^2+\\
& & \frac{3}{2} \left(\frac{1-\mu }{{\rho_1}}+\frac{\mu }{{\rho_2}}\right) \left(\dot{\eta}^2+2 (\dot{\eta}\xi -\dot{\xi}\eta )+\dot{\xi}^2+{\rho}^2\right)-\frac{1}{2} \left(\frac{(1-\mu )^2}{{\rho_1}^2}+\frac{\mu ^2}{{\rho_2}^2}\right)+\\
& & \mu (1-\mu)\left(\left(4 \dot{\eta}+\frac{7 \xi }{2}\right) \left(\frac{1}{{\rho_1}}-\frac{1}{{\rho_2}}\right)-\frac{1}{2} \eta ^2 \left(\frac{\mu }{{\rho_1}^3}+\frac{1-\mu }{{\rho_2}^3}\right)+\left(\frac{3 \mu -2}{2 {\rho_1}}-\frac{1}{{\rho_1} {\rho_2}}+\frac{1-3 \mu }{2 {\rho_2}}\right)\right.\\
n&=&1-\frac{3}{2 c^2} \left(1-\frac{1}{3} \mu(1-\mu) \right)\\
\rho&=& \sqrt{\xi^2+\eta^2},\quad \rho_1=\sqrt{(\xi+\eta)^2+\eta^2}\com{and} \rho_2=\sqrt{(\xi+\eta-1)^2+\eta^2}
\end{eqnarray*}

\subsection{The system of equations}\label{eqns}

The equilibrium points of the system of differential equations given by (\ref{eq6}) are the solutions of the system $f=0$ and $g=0$ where  $f=\frac{\partial w}{\partial \xi }$ and $g=\frac{\partial w}{\partial \eta }$ evaluate at $\dot{\xi}=\dot{\eta}=0$. These equations are display on page  273 of the paper \cite{BH}.

\section{Existence  of $L_4$ in the relativistic case using the Poincare-Miranda theorem}\label{ex}

\subsection{Existence of $L_4$ for the Earth-Sun system}\label{ESsystem}

For computation in this section, we will assume that the earth and the sun are moving with a constant distance between them of $a_0=149597870700$ {\bf m}, with  the mass of the sun equal to $M_0=1.988544* 10^{30}\, {\bf Kg}$ and the mass of the earth equal to $5.9729 *10^{24} {\bf Kg}$. We will also be taking the speed of light to be $c_0=299792458 \, \frac{{\bf m}}{{\bf s}}$. Using this data we have that the values for $\mu$ and $c$ are given by 
$$\mu= \frac{59729}{19885499729}\approx 3.00365*10^{-6}$$

and

$$ c=c_0 * \sqrt{\frac{a_0}{G (m_0+M_0)}}=\frac{149896229 \sqrt{\frac{1495978707}{3317816087784734}}}{10}\approx 10065.3$$

We will prove the existence of the relativistic point $L_4$ for the Earth-Sun system. This is, we will prove the existence of a point $(\xi_0,\eta_0)$ that is within a close to the point $(\frac{1-2 \mu}{2},\frac{\sqrt{3}}{2})$ that satisfies the equation 
\begin{eqnarray}\label{the eq}
f(\xi_0,\eta_0)=0\com{and} g(\xi_0,\eta_0)=0
\end{eqnarray} 

In order to prove the existence of $(\xi_0,\eta_0)$, let us consider the following five points

\begin{eqnarray*}
P_1&=& \left(\frac{312498189077}{625000000000},\frac{4330127145451}{5000000000000}\right)\\
&=&\left(0.4999971025232, 
0.8660254290902\right)\\
P_2&=&\left(\frac{312498095327}{625000000000},\frac{4330127145451}{5000000000000}\right) \\
&=&\left(0.4999969525232, 
0.8660254290902\right)\\
P_3&=&\left(\frac{312498189077}{625000000000},\frac{43301267124383}{50000000000000}\right) \\
&=&\left(0.4999971025232, 
0.86602534248766\right)\\
P_4&=&\left(\frac{312498095327}{625000000000},\frac{43301267124383}{50000000000000}\right)\\
&=&\left(0.4999969525232, 
0.86602534248766\right)\\
\end{eqnarray*}

And let $\beta_1$ be the line that connects $P_2$ with $P_1$, $\beta_2$ be the line that connect $P_4$ with $P_3$, $\beta_3$ be the line that connect $P_3$ with $P_1$ and $\beta_4$ be the line that connect $P_4$ with $P_2$. More precisely,
\begin{eqnarray*}
\beta_1(t)&=&t P_1+(1-t) P_2\\
\beta_2(t)&=&t P_3+(1-t) P_4\\
\beta_3(t)&=&t P_1+(1-t) P_3\\
\beta_4(t)&=&t P_2+(1-t) P_4\\
\end{eqnarray*}

\begin{figure}[hbtp]
\begin{center}\includegraphics[width=.7\textwidth]{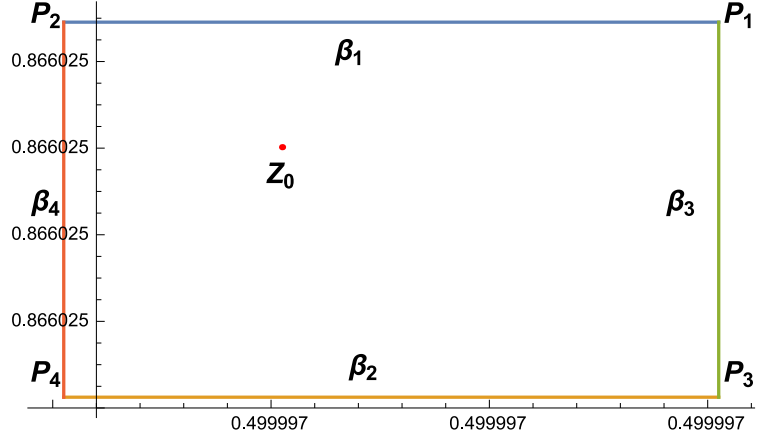}
\end{center}
\caption{The Poincare Miranda theorem guarantees that the system of equations $f=0$ and $g=0$ has a solution inside the region above }\label{fig1}
\end{figure}
\begin{thm}
There exists a solution of the system of equations  $f=0$ and $g=0$ inside the rectangle delimited by the union of the curves $\beta_1$, $\beta_2$, $\beta_3$ and $\beta_4$. Moreover, we have that the first six significant digits of the functions evaluated at the points $Z_0$, $P_1$, $P_2$, $P_3$ and $P_4$ are given by 

\begin{eqnarray*} 
f(P_1)&=& 1.124997\dots * 10^{-7} \qquad g(P_1)\, =\,  1.94854\dots * 10^{-7}\\
f(P_2)&=& -2.22877\dots * 10^{-13} \qquad g(P_2)\, =\,  3.84116\dots * 10^{-13}\\
f(P_3)&=& 4.56885\dots * 10^{-13}  \qquad g(P_3)\, =\, -7.69391\dots * 10^{-13}\\
f(P_4)&=& -1.12499\dots * 10^{-7} \qquad g(P_4)\, =\, - 1.94855\dots * 10^{-7} \\
\end{eqnarray*}

We also have that $g>0$ on $\beta_1$, $g<0$ on $\beta_2$, $f>0$ on $\beta_3$ and $f<0$ on $\beta_4$. As a consequence of the Poincare-Miranda Theorem we conclude that there exists a point $P_0=( \xi_0,\eta_0)$ inside the region bounded by the four curves $\beta_i$ such that  $f(P_0)=g(P_0)=0$.
\end{thm}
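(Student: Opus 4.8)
The plan is to verify the hypotheses of the Poincar\'e--Miranda theorem directly on the axis-aligned rectangle $R$ whose corners are $P_1,P_2,P_3,P_4$. First I would record the geometry: $P_1,P_2$ share the upper value of $\eta$ and $P_3,P_4$ the lower value, while $P_1,P_3$ share the right value of $\xi$ and $P_2,P_4$ the left value. Hence $\beta_1,\beta_2$ are the top and bottom edges, along which only $\xi$ varies, and $\beta_3,\beta_4$ are the right and left edges, along which only $\eta$ varies. The version stated in the introduction applies here with the roles of $f$ and $g$ interchanged and the signs adjusted (replacing $(f,g)$ by $(-g,-f)$ leaves both the hypotheses and the common zero set unchanged), so what must be shown is that $g$ keeps a constant sign on each horizontal edge ($g>0$ on $\beta_1$, $g<0$ on $\beta_2$) and that $f$ keeps a constant sign on each vertical edge ($f>0$ on $\beta_3$, $f<0$ on $\beta_4$).

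The observation that organizes the whole argument is that all four edgewise sign statements are governed by a \emph{single} function. Since $f=\partial w/\partial\xi$ and $g=\partial w/\partial\eta$, the rate of change of $g$ along a horizontal edge is $\partial g/\partial\xi=\partial^2 w/\partial\xi\partial\eta$, and the rate of change of $f$ along a vertical edge is $\partial f/\partial\eta=\partial^2 w/\partial\eta\partial\xi$; by equality of mixed partials these coincide. I would therefore prove the one inequality $\partial^2 w/\partial\xi\partial\eta>0$ throughout $R$. Evaluating the Newtonian part at $L_4$ (where $\rho_1=\rho_2=1$) gives $\partial^2 w_0/\partial\xi\partial\eta=\tfrac{3\sqrt3}{4}(1-2\mu)\approx 1.3$, while the relativistic correction $\tfrac{1}{c^2}\,\partial^2 w_1/\partial\xi\partial\eta$ has size $O(c^{-2})\approx 10^{-8}$; since $R$ has diameter $\sim 10^{-7}$, the mixed partial stays in a tiny neighborhood of $1.3$ and is bounded below by a positive constant on all of $R$. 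This is a robust sign to certify precisely because the quantity is far from zero.

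Granting $\partial^2 w/\partial\xi\partial\eta>0$ on $R$, each restriction is monotone: $g$ increases along $\beta_1$ and $\beta_2$ as $\xi$ increases, and $f$ increases along $\beta_3$ and $\beta_4$ as $\eta$ increases. Monotonicity then collapses each edgewise sign to a single endpoint value, the one at which the function is smallest in absolute value: I would need $g(P_2)>0$ and $f(P_2)<0$ at the corner $P_2$, and $g(P_3)<0$ and $f(P_3)>0$ at the corner $P_3$. These are exactly the four numbers of size $\sim 10^{-13}$ recorded in the statement; the large-magnitude corners $P_1,P_4$ play no role once monotonicity is in hand. To finish, I would certify these four signs rigorously: because the corners are rational points, each of $f,g$ there is an explicit algebraic number assembled from the three radicals $\rho,\rho_1,\rho_2=\sqrt{\text{(rational)}}$, whose sign is decided by exact arithmetic over $\mathbb{Q}$ (isolating the radicals and squaring, or evaluating to guaranteed precision). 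This is the reason the computation is carried out with quotients of integers rather than decimals.

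The main obstacle is this last certification. The boundary values at $P_2$ and $P_3$ are only about $10^{-13}$ in magnitude, so no rounded numerical evaluation could distinguish them from zero; a rigorous sign must come from exact arithmetic on numbers built from three distinct square roots, and clearing those radicals inflates the polynomial degree rapidly. Everything else is comparatively routine: the monotonicity inequality $\partial^2 w/\partial\xi\partial\eta>0$ is far from the borderline, the corners $P_1,P_4$ are easy, and the final appeal to Poincar\'e--Miranda is immediate once the four delicate signs at $P_2$ and $P_3$ are secured.
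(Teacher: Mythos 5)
Your proposal is correct and takes essentially the same route as the paper: the paper also evaluates $f$ and $g$ exactly at the four corners (rational inputs, guaranteed digits via Mathematica's \texttt{RealDigits}), shows the directional derivative of $g$ along $\beta_1,\beta_2$ and of $f$ along $\beta_3,\beta_4$ keeps a constant sign (both $\approx 1.29903$, which is precisely your mixed partial $\tfrac{3\sqrt{3}}{4}(1-2\mu)$ up to the $O(c^{-2})$ correction), and then concludes by monotonicity plus Poincar\'e--Miranda. Your one refinement --- collapsing the four edge derivatives into the single inequality $\partial^2 w/\partial\xi\,\partial\eta>0$ via equality of mixed partials --- is a tidy economy the paper does not state explicitly, but it is the same argument.
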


\begin{proof}
The proof to the theorem relies on the fact that we have an exact expression (an analytic expression) for $f$ and $g$ and the fact that programs like Mathematica allow us to precisely compute any desired amount of real digits of an exact expression. In order to obtain these digits, it is required that we work with exact numbers, that is the reason we decided to use rational numbers and not decimals. The computation for the values of the functions $f$ and $g$ were obtained by using the command $\text{RealDigits}[f(P_i),10,6]$ and $\text{RealDigits}[g(P_i),10,6]$ form the program Wolfram Mathematica 10. It is not difficult to check that the directional derivative of $g$ along the velocity of the curves $\beta_1$ and $\beta_2$ does not change sign, they both are approximately $1.29903$, and also, the directional derivative of $f$ along the curves $\beta_3$ and $\beta_4$ does not change sign, they both are  approximately $1.29903$. Then, we conclude that the function $g$ is monotonic along $\beta_1$ and $\beta_2$, this fact along with the values of $g$ at the endpoints allows to prove that $g$ is positive on $\beta_1$ and negative on $\beta_2$. A similar arguments holds for the function $f$. This concludes the proof  of the  theorem.
\end{proof}

\section{Exact expression for the characteristic polynomial at the equilibrium points }\label{cps}

The following theorem provides an expression for the characteristic polynomial of a system of the type given by the relativistic three body problem.

\begin{thm}\label{cp}
Let us consider the potential $U=U(x,\dot{x},y,\dot{y})$ and let us consider the following system of ODE

$$\ddot{x}-2 k \dot{y}=\frac{\partial U}{\partial x}-\frac{d\, }{dt}(\frac{\partial U}{\partial \dot{x}}),  \quad
\ddot{y}+2 k \dot{x}=\frac{\partial U}{\partial y}-\frac{d\, }{dt}(\frac{\partial U}{\partial \dot{y}})$$
\end{thm}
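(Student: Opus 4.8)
The plan is to turn the second-order system into a first-order one, linearize at an equilibrium, and show that the velocity-block of the linearization is \emph{antisymmetric}; this single structural fact forces the characteristic polynomial to be even. Throughout write $U_{ab}$ for the second partial $\partial^2 U/\partial a\,\partial b$.

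First I would expose the accelerations hidden in the terms $\frac{d}{dt}(\partial U/\partial\dot x)$ and $\frac{d}{dt}(\partial U/\partial\dot y)$. By the chain rule $\frac{d}{dt}(\partial U/\partial\dot x)=U_{\dot x x}\dot x+U_{\dot x\dot x}\ddot x+U_{\dot x y}\dot y+U_{\dot x\dot y}\ddot y$, and similarly for $\dot y$. Moving the $\ddot x,\ddot y$ contributions to the left gives $M\,(\ddot x,\ddot y)^{T}=R$, where $M=I+\Hess_{(\dot x,\dot y)}U$ is symmetric and $R=R(x,\dot x,y,\dot y)$. In the regime of interest $M$ is a small perturbation of the identity, hence invertible, so with $\phi=(x,y,\dot x,\dot y)$ I obtain an explicit first-order system $\dot\phi=F(\phi)$ whose first two components return the velocities $(\dot x,\dot y)$ and whose last two are $M^{-1}R$.

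Next I would linearize at an equilibrium $\phi_0=(x_0,y_0,0,0)$, where the velocities vanish and $\partial U/\partial x=\partial U/\partial y=0$. The decisive point is that $R$ vanishes at $\phi_0$, so in $D(M^{-1}R)=(DM^{-1})R+M^{-1}(DR)$ the first term drops and only $M^{-1}(DR)$ remains. Computing $DR$ at $\phi_0$ I expect a clean cancellation: since $\partial U/\partial x$ itself depends on the velocities, its velocity-derivatives combine with those coming from $\frac{d}{dt}(\partial U/\partial\dot x)$, and equality of mixed partials annihilates the diagonal entries while the off-diagonal entries pair up with opposite signs. The outcome is that the position-block of $DR$ is the symmetric Hessian $B=\Hess_{(x,y)}U$, and the velocity-block is $G=\gamma J$ with $J=\begin{pmatrix}0&1\\-1&0\end{pmatrix}$ and $\gamma=U_{x\dot y}-U_{\dot x y}+2k$. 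Thus the Jacobian has the block form $A=\begin{pmatrix}0&I\\M^{-1}B&M^{-1}G\end{pmatrix}$ with $M,B$ symmetric and $G$ antisymmetric.

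Finally I would compute $\det(\lambda I-A)$ via the Schur complement, obtaining $\det(\lambda I-A)=\det(M^{-1})\,\det(\lambda^2 M-\lambda G-B)$. Transposing the inner $2\times 2$ matrix and using $M^{T}=M$, $B^{T}=B$, $G^{T}=-G$ shows $\det(\lambda^2 M+\lambda G-B)=\det(\lambda^2 M-\lambda G-B)$, so the polynomial is invariant under $\lambda\mapsto-\lambda$; its $\lambda^3$ and $\lambda$ coefficients vanish and it has the asserted form $\lambda^4+a_1\lambda^2+a_2$. Expanding the $2\times2$ determinant then reads off $a_2=\det B/\det M$ and $a_1=\gamma^2/\det M-\tr(M^{-1}B)$, which in the Newtonian limit $M=I$, $\gamma=2$ recovers $a_1=4-\tr B=1$ and $a_2=\det B=\tfrac{27}{4}\mu(1-\mu)$, matching Section \ref{units}. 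The evenness is, conceptually, no accident: the system is the Euler--Lagrange equation of $L=\tfrac12(\dot x^2+\dot y^2)+k(x\dot y-\dot x y)+U$, so its linearization is infinitesimally symplectic and has spectrum symmetric about the origin. I expect the one delicate step to be the antisymmetry of $G$, because it rests on evaluating at a \emph{true} equilibrium (so that $R=0$ and the $(DM^{-1})R$ term disappears) together with the commutation of mixed partials; it is exactly this cancellation that eliminates the odd powers of $\lambda$, and losing it---by approximating the equilibrium rather than imposing $\partial U/\partial x=\partial U/\partial y=0$ exactly---is what produces the spurious $\lambda^3$ and $\lambda$ terms seen in \cite{BH}.
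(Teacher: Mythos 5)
Your proposal is correct, and it proves the theorem by a genuinely different route from the paper's. The starting point is the same: your matrix equation $M(\ddot x,\ddot y)^{T}=R$ with $M=I+\mathrm{Hess}_{(\dot x,\dot y)}U$ is exactly the paper's implicit definition of $F_2,F_4$, and your $\det M$ is the paper's quantity $d$; likewise, your observation that $R(\phi_0)=0$ kills the $(DM^{-1})R$ term is the block-matrix analogue of the paper evaluating its implicitly differentiated systems at $\tilde L_0$, where every term carrying a factor $\dot x,\dot y,F_2,F_4$ drops out. The divergence is in how the characteristic polynomial is then handled. The paper solves the resulting $2\times 2$ linear systems for the eight partials ${F_2}_x,\dots,{F_4}_{\dot y}$ one variable at a time and substitutes them into the characteristic polynomial of the explicit $4\times 4$ Jacobian, so the vanishing of the $\lambda^3$ and $\lambda$ coefficients emerges only as the outcome of a long (essentially computer-assisted) expansion. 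You instead exhibit the Jacobian in block form $\left(\begin{smallmatrix}0&I\\ M^{-1}B&M^{-1}G\end{smallmatrix}\right)$ with $M,B$ symmetric and $G=\gamma J$ antisymmetric (your ordering $(x,y,\dot x,\dot y)$ differs from the paper's $(x,\dot x,y,\dot y)$ only by a permutation similarity), reduce by Schur complement to the quadratic pencil $\det(\lambda^2M-\lambda G-B)$, and obtain evenness from transpose-invariance in one line --- a structural explanation of precisely the fact the introduction flags as ``not obvious'' and whose failure produces the spurious odd terms in \cite{BH}. Expanding your pencil reproduces the paper's formulas exactly: $a_2d=U_{xx}U_{yy}-U_{xy}^2$ and $a_1d=\gamma^2-\bigl(m_{11}U_{yy}+m_{22}U_{xx}-2m_{12}U_{xy}\bigr)$ agrees term by term with the stated $a_1d$, since by equality of mixed partials your $\gamma$ is $2k+U_{x\dot y}-U_{y\dot x}$; the Newtonian check and the trace form $a_1=\gamma^2/\det M-\mathrm{trace}(M^{-1}B)$ are also right. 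One small precision: the antisymmetry of $G$ uses only the vanishing of the velocities at $\phi_0$ plus commutation of mixed partials, while the true-equilibrium condition $U_x=U_y=0$ is what is needed (separately) to discard $(DM^{-1})R$; you slightly conflate the two ingredients, but both are genuinely present in your argument. As for what each approach buys: the paper's entry-wise computation is elementary and fits the exact-arithmetic verification style used elsewhere in the paper, whereas yours explains \emph{why} the polynomial is even, generalizes to any system of this Lagrangian form (your closing remark correctly identifies the symplectic symmetry, with nondegeneracy of the Legendre transform being exactly the hypothesis $d=\det M\neq 0$), and makes transparent why approximating the equilibrium destroys the evenness.
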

where $k$ is a constant. If $L_0=(x_0,y_0)$ is an equilibrium point of the system above and 
$d= 1+\frac{\partial^2U}{\partial \dot{x}^2}+\frac{\partial^2U}{\partial \dot{y}^2}+\frac{\partial^2U}{\partial \dot{x}^2} \frac{\partial^2U}{\partial \dot{y}^2}-\left(\frac{\partial^2U}{\partial \dot{x} \partial \dot{y}}\right)^2$ is not zero at $\tilde{L}_0=(x_0,0,y_0,0)$, then, the characteristic polynomial of the matrix that describes the linearization of the ODE at $L_0$ is given by

$$ \lambda^4 + a_1\lambda^2+a_2$$

where,
\begin{eqnarray*}
a_2\, d&=&\frac{\partial^2U}{\partial y^2} \frac{\partial^2U}{\partial x^2}-\left(\frac{\partial^2U}{\partial x \partial y}\right)^2
\end{eqnarray*}

and 

\begin{eqnarray*}
a_1\, d&=&-4 k \frac{\partial^2U}{\partial y \partial \dot{x}}+4 k \frac{\partial^2U}{\partial x \partial \dot{y}}+\left(\frac{\partial^2U}{\partial y \partial \dot{x}}\right)^2+\left(\frac{\partial^2U}{\partial x \partial \dot{y}}\right)^2-\frac{\partial^2U}{\partial y^2}-\frac{\partial^2U}{\partial y^2} \frac{\partial^2U}{\partial \dot{x}^2}-\\
& &2 \frac{\partial^2U}{\partial y \partial \dot{x}} \frac{\partial^2U}{\partial x \partial \dot{y}}+2 \frac{\partial^2U}{\partial \dot{x} \partial \dot{y}} \frac{\partial^2U}{\partial x \partial y}-\frac{\partial^2U}{\partial \dot{y}^2} \frac{\partial^2U}{\partial x^2}-\frac{\partial^2U}{\partial x^2}+4 k^2
\end{eqnarray*}

\begin{proof}
The ODE in this theorem can be reducee to the first order ODE $\dot{\phi}=F(\phi)$ with $\phi=(x,\dot{x},y,\dot{y})$ and 
$$ F(\phi)=(\dot{x},{F_2}(\phi),\dot{y},{F_4}(\phi))$$
where the functions ${F_2}$ and ${F_4}$ are given as the solution, near $\tilde{L}_0=(x_0,0,y_0,0)$, of the system of equations

\begin{eqnarray*}
{F_2}-2k\dot{y}&=&\frac{\partial U}{\partial x}-\dot{x} \frac{\partial^2U}{\partial \dot{x} \partial x}-{F_2} \frac{\partial^2U}{\partial \dot{x}^2}-\dot{y} \frac{\partial^2U}{\partial \dot{x} \partial y}-{F_4} \frac{\partial^2U}{\partial \dot{x} \partial \dot{y}}\\
{F_4}+2k\dot{x}&=&\frac{\partial U}{\partial y}-\dot{x} \frac{\partial^2U}{\partial \dot{y} \partial x}-{F_2} \frac{\partial^2U}{\partial \dot{x} \partial \dot{y} }-\dot{y} \frac{\partial^2U}{\partial \dot{y} \partial y}-{F_4} \frac{\partial^2U}{\partial \dot{y}^2}
\end{eqnarray*}

Recall that we have that ${F_2}(\tilde{L}_0)={F_4}(\tilde{L}_0)=0$. If we compute the partial derivative with respect to $x$ to the system of equations above and we evaluate at $\tilde{L}_0$, we get the following system of equation 

\begin{eqnarray*}
{F_2}_x&=&\frac{\partial^2 U}{\partial x^2}-{F_2}_x \frac{\partial^2U}{\partial \dot{x}^2}-{F_4}_x \frac{\partial^2U}{\partial \dot{x} \partial \dot{y}}\\
{F_4}_x&=&\frac{\partial^2 U}{\partial x\partial y}-{F_2}_x \frac{\partial^2U}{\partial \dot{x} \partial \dot{y} }-{F_4}_x \frac{\partial^2U}{\partial \dot{y}^2}
\end{eqnarray*}

This is a linear system on ${F_2}_x$ and  ${F_4}_x$ with solution solution satisfying,

\begin{eqnarray*}
{F_2}_x\, d&=&\frac{\partial^2U}{\partial \dot{y}^2} \frac{\partial^2U}{\partial x^2}+\frac{\partial^2U}{\partial x^2}-\frac{\partial^2U}{\partial \dot{x} \partial \dot{y}} \frac{\partial^2U}{\partial x \partial y}\\
{F_4}_x\, d&=& \frac{\partial^2U}{\partial \dot{x}^2} \frac{\partial^2U}{\partial x \partial y}+\frac{\partial^2U}{\partial x \partial y}    -
\frac{\partial^2U}{\partial \dot{x} \partial \dot{y}} \frac{\partial^2U}{\partial x^2}
\end{eqnarray*}

Likewise we can obtain expression for ${F_2}_y$ and ${F_4}_y$ and for ${F_2}_{\dot{x}}$, $g_{\dot{x}}$ and finally for ${F_2}_{\dot{y}}$, ${F_4}_{\dot{y}}$ evaluated at the point $\tilde{L}_0$. The theorem follow after replacing these expression for the partial derivative of the functions ${F_2}$ and ${F_4}$ into the characteristic polynomial of the matrix

$$
\left(
\begin{array}{cccc}
 0 & 1 & 0 & 0 \\
 {F_2}_x & {F_2}_{\dot{x}} & {F_2}_y & {F_2}_{\dot{y}} \\
 0 & 0 & 0 & 1 \\
 {F_4}_x & {F_4}_{\dot{x}} & {F_4}_y & {F_4}_{\dot{y}} \\
\end{array}
\right)
$$

\end{proof}

\begin{rem} When we replace $U$ with $\omega_0$, given in section \ref{units}, and $k=1$, $x_0=\frac{1-2\mu}{2}$ and $y_0=\frac{\sqrt{3}}{2}$, then as it is expected, we obtain that $a_1=1$ and $a_2=-\frac{27}{4}(\mu-1)\mu$
\end{rem}

\begin{rem} We will not be using the well-known expressions  $x_0=\frac{1-2 \mu }{2} (\frac{5}{4 c^2}+1)$ and $y_0=\frac{\sqrt{3}}{2}\, (1-\frac{6 \mu ^2-6 \mu +5}{12 c^2})$ for the relativistic point $L_4$, (see \cite{BH}), due to the fact they only provides an approximation of the coordinates for $L_4$ and as we pointed out before, not approximation will provide a  proof that the eigenvalues have norm exactly 1.
\end{rem}

Once we have shown that the coefficients for $\lambda$ and $\lambda^3$ are zero at the equilibrium point is is somehow easier to check the stability. The next proposition 
\begin{prop}\label{prop} For the differential equation defined on Theorem (\ref{cp}), we have that, if an equilibrium point $(x_0,y_0)$  lies on a region $\Omega$ in $\bfR{2}$ and for every $(x,y)\in \Omega$,  

$$a_1>0,\quad a_2>0\com{and} a_1^2>4a_2$$

then, this equilibrium point $(x_0,y_0)$ is stable.
\end{prop}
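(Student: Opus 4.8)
The plan is to read off the stability directly from the shape of the characteristic polynomial supplied by Theorem \ref{cp}. Since $(x_0,y_0)\in\Omega$ is an equilibrium point, Theorem \ref{cp} tells us that the linearization of the ODE at $(x_0,y_0)$ has characteristic polynomial $p(\lambda)=\lambda^4+a_1\lambda^2+a_2$, where $a_1$ and $a_2$ are the quantities appearing in that theorem, evaluated at $\tilde{L}_0=(x_0,0,y_0,0)$. The crucial structural feature is that $p$ is \emph{biquadratic}: the coefficients of $\lambda^3$ and $\lambda$ vanish. This is exactly what Theorem \ref{cp} establishes, and it is what makes the argument elementary. I would therefore set $\nu=\lambda^2$ and study the auxiliary quadratic $q(\nu)=\nu^2+a_1\nu+a_2$, whose two roots $\nu_{\pm}$ determine the four eigenvalues through $\lambda=\pm\sqrt{\nu_{\pm}}$.

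First I would locate the roots of $q$ using the three hypotheses. By Vieta's formulas, $\nu_++\nu_-=-a_1$ and $\nu_+\nu_-=a_2$. The hypothesis $a_1^2>4a_2$ says the discriminant of $q$ is strictly positive, so $\nu_+$ and $\nu_-$ are real and distinct. The hypothesis $a_2>0$ forces $\nu_+\nu_->0$, so the two roots have the same sign, and in particular neither is zero. The hypothesis $a_1>0$ forces $\nu_++\nu_-<0$, so that common sign is negative. Hence $\nu_-<\nu_+<0$ are two distinct negative real numbers.

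From $\lambda^2=\nu_{\pm}<0$ I would then conclude that the four eigenvalues are $\pm i\sqrt{-\nu_+}$ and $\pm i\sqrt{-\nu_-}$, all purely imaginary with zero real part, exactly as in the Newtonian discussion of section \ref{units}. Because $\nu_+\neq\nu_-$ and both are nonzero, these four eigenvalues are pairwise distinct, so the $4\times 4$ linearization matrix has four simple eigenvalues and is therefore diagonalizable over the complex numbers. Consequently every solution of the linearized system is a bounded superposition of terms of the form $e^{\pm i\sqrt{-\nu_{\pm}}\,t}$, and the equilibrium $(x_0,y_0)$ is linearly stable.

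The step I expect to require the most care is not any computation but the role of the strict inequality $a_1^2>4a_2$. It is precisely this condition that guarantees the eigenvalues are simple, hence semisimple. This semisimplicity is essential: if one had only $a_1\ge 0$, $a_2\ge 0$ together with equality $a_1^2=4a_2$ in the discriminant, the matrix could carry a repeated purely imaginary eigenvalue in a nontrivial Jordan block, producing secular (linearly growing) terms of the form $t\,e^{i\omega t}$ and destroying stability. I would therefore emphasize in the write-up that all three inequalities are strict, and that the hypothesis ``for every $(x,y)\in\Omega$'' is only used through its consequence at the single point $(x_0,y_0)$, where the coefficient $d$ of Theorem \ref{cp} must in addition be nonzero so that $a_1$ and $a_2$ are well defined.
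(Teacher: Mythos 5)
Your proposal is correct and follows essentially the same route as the paper: reduce the biquadratic $\lambda^4+a_1\lambda^2+a_2$ to a quadratic in $\lambda^2$, use the three strict inequalities to show both values of $\lambda^2$ are negative reals, and conclude the eigenvalues are purely imaginary. In fact you are slightly more careful than the paper, which stops at ``the real part of the eigenvalues are zero''; your observation that $a_2>0$ and $a_1^2>4a_2$ make the four eigenvalues distinct, hence the linearization semisimple with bounded solutions, makes explicit a step the paper leaves implicit in passing from purely imaginary eigenvalues to linear stability.
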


\begin{proof} Since the eigenvalues at the equilibrium point $(x_0,y_0)$ satisfies the equation $\lambda^4+a_1\lambda^2+a_2=0$ then, we obtain that either

$$\lambda^2=-\frac{a_1}{2}+\frac{1}{2}\sqrt{a_1^2-4a_2}\com{or}\lambda^2=-\frac{a_1}{2}-\frac{1}{2}\sqrt{a_1^2-4a_2}$$ 

Then, it follows, in either case, $\lambda^2$ is a real negative number. Therefore the real part of the eigenvalues are zero and the proposition follows.
\end{proof}

\begin{cor} In the Earth-Sun system (see section \ref{ESsystem}) the Lagrangian point $L_4$ is stable.
\end{cor}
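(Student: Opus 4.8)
The plan is to combine the three preceding results. The existence theorem locates the equilibrium point $P_0 = (\xi_0, \eta_0)$ inside the small rectangle $\Omega$ bounded by the curves $\beta_1, \beta_2, \beta_3, \beta_4$; Theorem \ref{cp} gives the characteristic polynomial of the linearization at an equilibrium point in the form $\lambda^4 + a_1 \lambda^2 + a_2$; and Proposition \ref{prop} reduces stability to the three inequalities $a_1 > 0$, $a_2 > 0$ and $a_1^2 > 4 a_2$ holding throughout $\Omega$. First I would specialize Theorem \ref{cp} by setting $U = w$ and $k = n$, with $\mu$ and $c$ fixed at the Earth-Sun values computed in section \ref{ESsystem}, and check that the determinant $d$ of that theorem is nonzero on $\Omega$. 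Since $w_0$ carries no dependence on $\dot{\xi}, \dot{\eta}$, every contribution to $d$ beyond the constant $1$ comes from $\frac{1}{c^2} w_1$ and is therefore of order $1/c^2 \approx 10^{-8}$; hence $d$ is uniformly close to $1$ on $\Omega$ and in particular nonzero, so the form $\lambda^4 + a_1 \lambda^2 + a_2$ is valid.

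With Theorem \ref{cp} in force, $a_1$ and $a_2$ become explicit analytic functions of $(\xi, \eta)$ through the second partials of $w$, defined at every point of $\Omega$ and not only at equilibria. The core of the argument is then to verify the three inequalities of Proposition \ref{prop} uniformly over the whole rectangle $\Omega$, which is exactly what lets us conclude stability at $P_0$ despite not knowing the exact location of $P_0$. Following the philosophy of the existence proof, I would carry this out in exact rational arithmetic: evaluate $a_1$ and $a_2$, together with rigorous upper and lower bounds, at the four corners $P_1, \ldots, P_4$, and control their variation across $\Omega$ using bounds on the gradients of $a_1$ and $a_2$. Because $\Omega$ has side lengths of order $10^{-7}$ and $10^{-8}$ and the functions involved are smooth, this variation is negligible.

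The guiding numerical picture is that the relativistic quantities differ from their Newtonian counterparts only by corrections of order $1/c^2$. The first remark following Theorem \ref{cp} records that in the Newtonian case $a_1 = 1$ and $a_2 = -\frac{27}{4}(\mu-1)\mu = \frac{27}{4}\mu(1-\mu)$; for $\mu \approx 3.004 \times 10^{-6}$ this gives $a_2 \approx 2.03 \times 10^{-5}$. Thus on $\Omega$ I expect $a_1$ to be uniformly within about $10^{-8}$ of $1$, so that $a_1 > 0$ and $a_1^2 \approx 1$ are immediate, and I expect $a_2$ to be uniformly within about $10^{-8}$ of $2.03 \times 10^{-5}$, so that the gap $a_1^2 - 4 a_2 \approx 1 - 8 \times 10^{-5} > 0$ disposes of the third inequality at once.

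The hard part will be the inequality $a_2 > 0$. Unlike $a_1 > 0$ and $a_1^2 > 4 a_2$, which enjoy comfortable margins of order $1$, the quantity $a_2$ is itself only of order $10^{-5}$, so I must establish that the Newtonian term $\frac{27}{4}\mu(1-\mu)$ genuinely dominates the relativistic and region-size perturbations rather than being cancelled by them. Since those perturbations are of order $10^{-8}$ — three orders of magnitude below $a_2$ itself — a rigorous lower bound for $a_2$ obtained from the rational computation above should confirm $a_2 > 0$ uniformly on $\Omega$. This completes the verification of the hypotheses of Proposition \ref{prop} and hence, via the existence of $P_0 \in \Omega$ guaranteed by the earlier theorem, proves the stability of $L_4$ for the Earth-Sun system.
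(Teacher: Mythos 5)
Your proposal is correct and follows the paper's overall skeleton exactly — combine the Poincar\'e--Miranda rectangle, Theorem \ref{cp} with $U=w$, $k=n$ and the Earth--Sun values of $\mu$ and $c$, bound $d$ near $1$ (the paper verifies $1.000000078898 < d < 1.000000079175$), and certify the three inequalities of Proposition \ref{prop} uniformly on the rectangle so that the unknown location of $P_0$ inside it is harmless. Where you genuinely diverge is in the certification step. The paper expands $a_1 d$ and $a_2 d$ into $686$ and $2227$ terms respectively of the form (\ref{basis}) and bounds \emph{each term} over the rectangle by Lagrange multipliers (Mathematica's \emph{Maximize} and \emph{Minimize}), summing to get $0.99999869\ldots < a_1 d < 1.00000128\ldots$ and $1.718\times 10^{-5} < a_2 d < 2.337\times 10^{-5}$; you instead propose exact-arithmetic corner evaluations plus a Lipschitz (gradient-bound) control of the variation across the rectangle. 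Both are rigorous: your route requires one further symbolic differentiation but only coarse global bounds on the gradients, while the paper's route avoids derivatives at the cost of termwise bounding of thousands of summands (with the attendant over-estimation, which the paper's remark shows is nearly unaffordable for $a_2$). One quantitative caveat in your write-up: your claim that the region-size perturbation of $a_2$ is of order $10^{-8}$ is too optimistic. The paper's closing remark shows $a_2 d$ moves from about $+1.71\times 10^{-5}$ at $\eta = 0.866025$ to about $-1.78\times 10^{-5}$ at $\eta = 0.866$, so $\partial(a_2 d)/\partial \eta$ is of order one (roughly $1.4$); over the rectangle, whose sides are about $1.5\times 10^{-7}$ and $8.7\times 10^{-8}$, the variation of $a_2$ is therefore of order $10^{-7}$, not $10^{-8}$. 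That is still two orders of magnitude below $a_2 \approx 2\times 10^{-5}$, so your argument closes — but note that the real danger to $a_2 > 0$ is this order-one positional sensitivity (which is exactly why the tiny Poincar\'e--Miranda localization is essential), not the $1/c^2$ relativistic corrections you emphasize.
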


\begin{proof}
The proof follows by checking the inequalities in Proposition \ref{prop} for the expression for $a_1$ and $a_2$ with domain in the rectangle displayed in Figure \ref{fig1}. In these expression we have replaced $c$ by $\frac{149896229 \sqrt{\frac{1495978707}{3317816087784734}}}{10}$ and $\mu$ by $\frac{59729}{19885499729}$.  In order to prove these inequalities we have used the method of Lagrange multipliers. To make the computation easier we first find bound for the expression $d 
$ defined in Theorem \ref{cp}. In this case a direct verification shows that $d$ lies between $1.000000078898$ and $1.000000079175$. The bounds above allow us to easily obtain bounds for $a_1$ and $a_2$ once we obtain bounds for the functions $a_1 d$ and $a_2 d$. The expressions for $a_1d$ and $a_2 d$ are explicitly given in Theorem \ref{cp}. Both expressions are very long. In order to get the desired bounds, we expanded each one of them. It is not difficult to show that they can be written as linear combinations of expressions of the form

\begin{eqnarray}\label{basis}
 x^{n_1} y^{n_2} \left(\sqrt{(\mu +x-1)^2+y^2} \right)^{-n_3} \left(\sqrt{(\mu +x)^2+y^2}\right)^{-n_4}
 \end{eqnarray}

where $n_1$, $n_3$, $n_3$ and $n_4$ are nonnegative integers. Using a computer software (The command {\it Expand[ ]} from Mathematica) we obtain that 

$$a_1=\sum_{i=1}^{686} a_{1i}\com{and} a_2=\sum_{i=1}^{2227} a_{2i} $$

where each $a_{ji}$ is the product of a number with an expression of the form given in Equation (\ref{basis}). The next step is to use the method of Lagrange Multipliers to each one of the expressions $a_{ji}$. This is also done with the help of the computer, this time we use the command {\it Maximize[ ]} and {\it Minimize[ ]} from Mathematica. By adding all the lower bounds and all the upper bounds we obtain that 

$$ 0.9999986963615272< a_1d < 1.0000012838965228$$
and

$$ 0.0000171807467861<  a_2 d<0.0000233683521331 $$

The bounds above along with the bounds for $d$ easy allow us to conclude the stability of the Lagrangian point $L_4$ for the Sun-Earth system.

\end{proof}

\begin{rem}
As we can see the values for $a_2$ are very small. Also it can be shown that $a_2 d$ evaluated at $(x,y)=(0.499997,0.866025)$ is positive, it is about $0.0000171438$ while, $a_2 d$ evaluated at $(x,y)=(0.499997,0.866)$ is negative, it is about $-0.0000177725$. As a consequence we needed to be sharp with the bounds. Recall that if for a given equilibrium point, the value for  $a_2 d$ is not positive then this equilibrium point is not stable.
\end{rem}
\section{conclusion}

We have used the Poincare-Miranda Theorem, the Lagrange Multiplier method, the help of a computer to be able to handle big expression and Theorem \ref{cp}, to mathematically prove the stability of the point $L_4$  for the Earth-Sun system in the relativistic circular restrict three body problem.



\end{document}